\theoremstyle{plain}
\newtheorem{thm}{Theorem}
\newtheorem{lem}{Lemma}
\newtheorem{assumption}{Assumption}
\newenvironment{manualtheorem}[1]{%
  \manualtheoreminner
}{\endmanualtheoreminner}
\theoremstyle{definition}
\newtheorem{defn}{Definition}
\title{Modeling the effects of dynamic range compression on signals in noise}
\author{Ryan M. Corey and Andrew C. Singer\\ University of Illinois at Urbana-Champaign%
\thanks{This research was supported by the National Science Foundation under
Grant No. 1919257 and by an appointment to the Intelligence Community
Postdoctoral Research Fellowship Program at the University of Illinois
at Urbana-Champaign, administered by Oak Ridge Institute for Science
and Education through an interagency agreement between the U.S. Department
of Energy and the Office of the Director of National Intelligence.}
}
\date{}
\begin{document}

\maketitle

\begin{abstract}
Hearing aids use dynamic range compression (DRC),
a form of automatic gain control, to make quiet sounds louder and
loud sounds quieter. Compression can improve listening comfort,
but it can also cause distortion in noisy environments. It has been
widely reported that DRC performs poorly in noise, but there has been
little mathematical analysis of these distortion effects. This work
introduces a mathematical model to study the behavior of DRC in noise.
Using statistical assumptions about the signal envelopes, we define
an effective compression function that models the compression applied
to one signal in the presence of another. This framework is used to
prove results about DRC that have been previously observed experimentally:
that when DRC is applied to a mixture of signals, uncorrelated signal
envelopes become negatively correlated; that the effective compression
applied to each sound in a mixture is weaker than it would have been
for the signal alone; and that compression can reduce the long-term
signal-to-noise ratio in certain conditions. These theoretical results
are supported by software experiments using recorded speech signals.
\end{abstract}

\section{Introduction }

Hearing aids often perform poorly where people with hearing loss need help most: in noisy
environments with many competing sound sources. One challenge for
hearing aids in noise is a nonlinear processing technique known as
\emph{dynamic range compression} (DRC), which makes quiet sounds louder
and loud sounds quieter \citep{souza2002effects,allen2003amplitude,kates2005principles}.
Compression is used in all modern hearing aids, but it can cause distortion
when applied to multiple overlapping signals. For example, a sudden
loud noise can reduce the gain applied to speech sounds. This effect
is well documented empirically, but has been little studied mathematically.
To better understand the performance of DRC in noisy environments,
this work applies tools from signal processing theory to model the
effects of DRC on mixtures of multiple signals.

The auditory systems of people with hearing loss have reduced dynamic
range compared to those of normal-hearing listeners: Quiet sounds
need to be amplified in order to be audible, but loud sounds can cause
discomfort. Hearing aids with DRC apply level-dependent amplification so that
the output signal has a smaller dynamic range than the input signal.
A typical DRC system is shown in Fig. \ref{fig:DRC-diagram}. An envelope
detector tracks the level of the input signal over time in one or
more frequency bands while a compression function adjusts the
amplification to keep the signal level
within a comfortable range. Both the envelope detector and the compression
function are nonlinear processes, so changes in one signal can affect
the processing applied to other signals.

This interaction between signals can be difficult to measure, but
hearing researchers have found three quantifiable effects of compression
in noise. First, as one signal grows louder, it reduces the gain applied
to the mixture, thereby lowering the level of the other signal(s)
in the output. This effect has been called co-modulation \citep{stone2004side}
or across-source modulation \citep{stone2007quantifying}. Second,
noise can reduce the effect of a compressor, especially at low signal-to-noise
ratios (SNR) \citep{souza2006measuring}. If one signal has much higher
level than another, the DRC system will apply gain based on the stronger
signal and will have little effect on the dynamic range of the weaker
signal. Finally, at high SNR, compressors tend to amplify low-level
noise more strongly than the higher-level signal of interest, which
can reduce the average SNR \citep{souza2006measuring,rhebergen2009dynamic,alexander2015effects}.
This effect has been observed in commercial hearing aids and shown to impair
speech comprehension \citep{naylor2009long,miller2017output}.

The adverse effects of noise on DRC systems have been well documented
empirically, but the problem has received
little mathematical analysis, which could provide more insight than
empirical evidence alone. This work applies signal processing research
methods to the DRC distortion problem: First, we make simplifying
assumptions to develop a tractable mathematical model of a complex
system. Next, we use that model to prove theorems that explain the
behavior of the system. Finally, we validate those theoretical results
using realistic experiments. 

\begin{figure}
\begin{centering}
\includegraphics{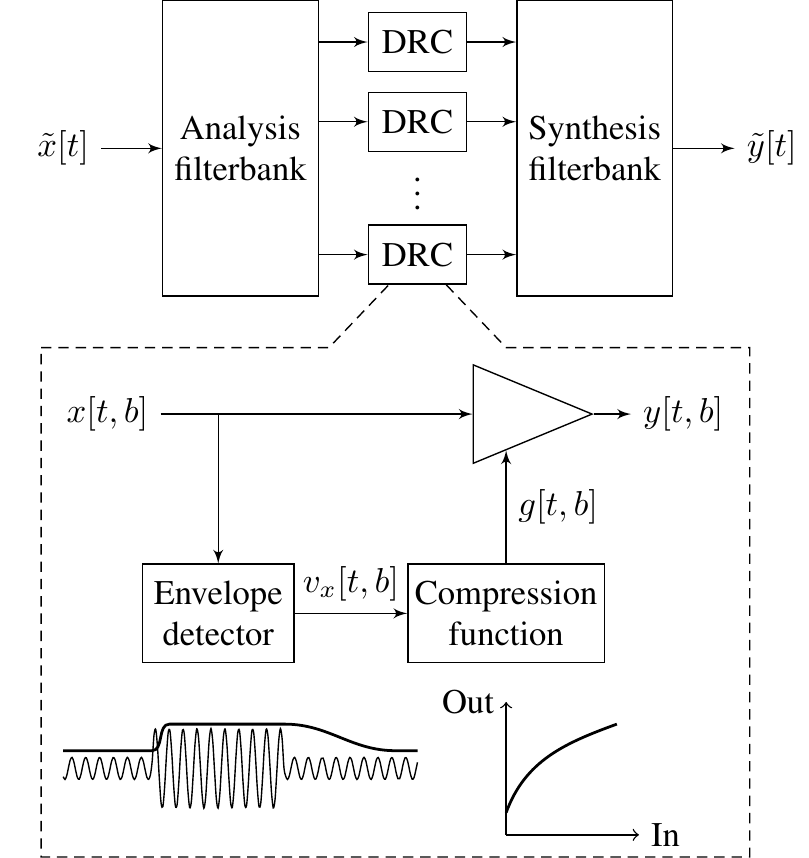}
\par\end{centering}
\caption{\label{fig:DRC-diagram}A typical DRC system performs automatic gain
control in each of several frequency bands or channels.}
\end{figure}

Compression systems are difficult to analyze because of the complex
interactions between the envelope detector and compression function,
both of which are nonlinear. Using the simplifying assumption that
envelopes are additive in signal mixtures, we can separate the effects
of the envelope detector from those of the compression function. To characterize
the interaction between signals in a mixture, we introduce the effective
compression function (ECF), which relates the input and output levels
of one signal in the presence of another. The ECF is used to explain
the three effects described above: that compression induces negative
correlation between signal envelopes (Section \ref{sec:across_source}),
that noise reduces the effect of compression (Section \ref{sec:effective_compression}),
and that compression can reduce average SNR in certain conditions
(Section \ref{sec:snr}). Each section includes a theorem
about the effect and simulation experiments that illustrate it. Wherever
possible, these results are compared to those reported in the hearing literature. 

\section{\label{sec:drc_overview}Dynamic Range Compression of a Single Signal}

Because most modern hearing aids are digital, we formulate the DRC
system in discrete time. Let the sequence $\tilde{x}[t]$ be a sampled
audio signal at the input of the DRC system, where $t$ is the sample
index. Let $\tilde{y}[t]$ be the output of the system.

\subsection{Filterbank decomposition}

In hearing aids, DRC is often performed separately in several frequency
bands. A filterbank splits the signal into $B$ channels corresponding
to different bands, which may be uniform or nonuniform and may or
may not overlap. Let $x[t,b]$ and $y[t,b]$ be the filterbank representations
of $\tilde{x}[t]$ and $\tilde{y}[t]$, respectively, in channels
$b=1,\dots,B$. 

The number and structure of channels are known to have significant
effects on the performance of DRC systems in noise \citep{naylor2009long,alexander2015effects,rallapalli2019effects}.
The levels of signals in narrower bands fluctuate more rapidly than
the levels of wideband signals, and so systems with more channels
tend to compress signals more strongly and to produce greater distortion
effects. The experiments in this work use a Mel-spaced filterbank
with 6 bands, which are roughly uniformly spaced at lower frequencies
and exponentially spaced at higher frequencies.

\subsection{Envelope detection}

The gain applied by DRC is calculated from the signal envelope,
which tracks the signal level over time. Signal level is typically defined in
terms of either magnitude ($|x|$) or power ($x^{2}$); this work
uses power. Let the nonnegative signal
$v_{x}[t,b]$ be the envelope of the input signal $x[t,b]$ at time
index $t$ and channel $b$. In the theoretical analysis presented here, the envelope
is an abstract property of a signal. For example, it can be thought
of as the time-varying statistical variance of a random process of
which the signal $x[t,b]$ is a realization. In real DRC systems,
the envelope is estimated from the observed signal.

Because hearing aids must process signals with imperceptible delay,
a practical envelope detector estimates signal level based on a moving
average of past samples. Most DRC systems respond faster to increases
in signal level (attack mode) than to decreases in signal level (release
mode). Short attack times, typically a few milliseconds, help to suppress
sudden loud sounds. Longer release times, from tens to hundreds of milliseconds, ensure that gain
is not increased too quickly during brief pauses \citep{jenstad2005quantifying}.

There are many ways of implementing an envelope detector \citep{giannoulis2012digital}.
A representative detector, which is used in the experiments throughout
this work, is the nonlinear recursive filter \citep{kates2008digital}
\begin{equation}
v_{x}[t,b]=\begin{cases}
\beta_{\text{a}}v_{x}[t\!-\!1,b]\!+\!(1\!-\!\beta_{\mathrm{a}})\left|x[t,b]\right|^{2},\\
 & \hspace{-2cm}\text{if }\left|x[t,b]\right|^{2}\ge v_{x}[t\!-\!1,b]\\
\beta_{\text{r}}v_{x}[t\!-\!1,b]\!+\!(1\!-\!\beta_{\mathrm{r}})\left|x[t,b]\right|^{2}, & \text{otherwise},
\end{cases}\label{eq:env_det}
\end{equation}
for $b=1,\dots,B,$ where $\beta_{\mathrm{a}}$ and $\beta_{\mathrm{r}}$
are constants that determine the attack and release times.

Because envelope detection is a nonlinear process, it contributes
to the distortion effects of DRC systems. The theorems in this work do not depend directly on the choice
of attack and release time, but these parameters do affect the distribution
of envelope samples: A slowly-changing envelope detector will measure
a narrower dynamic range than a fast-changing detector for the same
signal. Many distortion effects are therefore more severe for fast-acting
than for slow-acting compression \citep{jenstad2005quantifying,naylor2009long,alexander2015effects,reinhart2017effects,alexander2017acoustic}. 

\subsection{Compression function}

\begin{figure}
\begin{centering}
\includegraphics{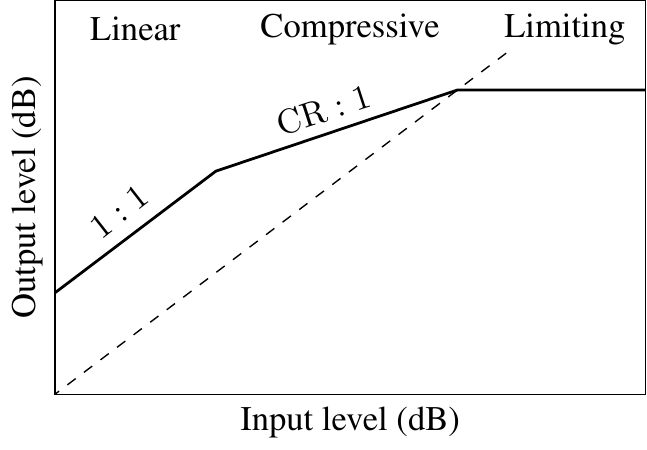}
\par\end{centering}
\caption{\label{fig:Compression-functions}A compression function $\mathcal{C}_{b}$,
shown here on a logarithmic scale, maps input levels
to output levels.}
\end{figure}

A compression function $\mathcal{C}_{b}$ determines the mapping between
input and output level in each channel:
\begin{equation}
v_{y}[t,b]=\mathcal{C}_{b}\left(v_{x}[t,b]\right),\quad b=1,\dots,B,
\end{equation}
where $v_{y}[t,b]$ is the target output level. The amplification
applied in each channel is then 
\begin{equation}
g[t,b]=\sqrt{\frac{v_{y}[t,b]}{v_{x}[t,b]}},\quad b=1,\dots,B,
\end{equation}
so that the output is 
\begin{equation}
y[t,b]=g[t,b]x[t,b],\quad b=1,\dots,B.
\end{equation}

Note that the target output level $v_{y}[t,b]$ is not necessarily
equal to the measured envelope of $y[t,b]$ because
the envelope is a nonlinear moving average of present and
past levels. Since this distinction is not important to our analysis, similar notation will be used for both.

Although compression functions are defined here in terms of input
and output level (i.e., power), they are often visualized and described on a logarithmic
scale, such as in dB SPL (sound pressure level). A typical ``knee-shaped''
compression function is shown in Fig. \ref{fig:Compression-functions}:
It features a linear region in which gain is constant, a compressive
region where the output level increases by less than the input level,
and and a limiting region that prevents the output from exceeding
a maximum safe level. 

The strength of compression can be characterized by the compression
ratio (CR), which is the inverse of the slope of the compression function on a log-log scale.
For example, in a 3:1 compression system, the output level increases
by 1 dB for every 3 dB increase in the input level. For a DRC system
with constant compression ratio $\mathrm{CR}$, the compression function
is given by the power-law relationship
\begin{equation}
\mathcal{C}_{b}(v)=g_{0}^{2}[b]v^{1/\mathrm{CR}},
\end{equation}
where $g_{0}^{2}[b]$ is a constant gain factor. Thus, for a 3:1 compressor,
the output level is proportional the cube root of the input level.
In limiters, $\mathcal{C}_{b}(v)$ is constant and so the compression
ratio is infinite.

While most compressors reported in the literature use
some combination of linear, power-law, and limiting compression functions,
many others are possible. For example, cascaded feedback systems can
be used to design smoothly curved compression functions with roughly
logarithmic shapes \citep{lyon2017human}. In digital systems, $\mathcal{C}_{b}$
can be arbitrary. To make our analysis as general as possible, we
allow the compression function to be any mapping between nonnegative
numbers such that the output level grows no faster than the input
level.
\begin{defn}
\label{def:cf}A function $\mathcal{C}_{b}(v)$ is a \emph{compression
function} if it is concave, nonnegative, and nondecreasing for all
$v>0$.
\end{defn}

To describe how much a compression function reduces the dynamic range
of a signal, we could compute its compression ratio. Because the compression
ratio can be infinite, however, it is more convenient to work with
its inverse, the compression slope.
\begin{defn}
\label{def:cs}For all points $v$ at which a compression function
$\mathcal{C}_{b}(v)$ is differentiable, the \emph{compression slope}
$\mathrm{CS}_{b}(v)$ is the slope of $\mathcal{C}_{b}(v)$ on a log-log
scale:
\begin{align}
\mathrm{CS}_{b}(v) & =\frac{\mathrm{d}}{\mathrm{d}u}\ln\mathcal{C}_{b}\left(e^{u}\right)\mid_{u=\ln v}\\
 & =\frac{\mathcal{C}'_{b}(v)}{\mathcal{C}_{b}(v)}v.
\end{align}
\end{defn}

For example, if $\mathcal{C}_{b}(v)=g_{0}^{2}[b]v^{\alpha}$, then
$\mathrm{CS}_{b}(v)=\alpha$ for all $v$. The smaller the compression
slope, the more the dynamic range of the signal is reduced.

\section{\label{sec:mixtures}Modeling Compression of Sound Mixtures}

\begin{figure}
\begin{centering}
\includegraphics{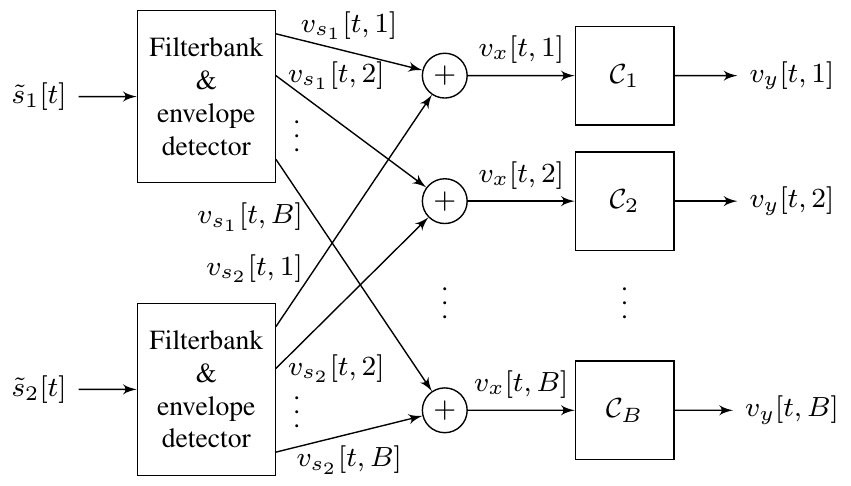}
\par\end{centering}
\caption{\label{fig:A-simplified-model}A simplified model separates the effects
of the filterbank and envelope detector from those of the compression
functions $\mathcal{C}_{b}$. The former is assumed to act independently
across component signals, while the latter act independently across
time and channels.}

\end{figure}

Hearing aids are often used in noisy environments with multiple sound
sources. Because DRC is a nonlinear process, these signals interact
and cause distortion. This distortion is especially difficult to analyze
because DRC involves two nonlinear operations: envelope detection
and level-dependent amplification. To create a tractable model, we
make a simplifying assumption about the signal envelopes that allows
us to separate the effects of these nonlinarities, as shown in Fig.
\ref{fig:A-simplified-model}. Under this model, the filterbank and
envelope detector determine the relationship between input signals
and envelope values; they are modeled as acting independently on each
source signal. The compression function determines the output levels
from these envelopes; it acts independently at each time index and
within each channel.

\subsection{Envelope model}

Suppose that the input to the system is $\tilde{x}[t]=\tilde{s}_{1}[t]+\tilde{s}_{2}[t]$,
where $\tilde{s}_{1}[t]$ and $\tilde{s}_{2}[t]$ are two discrete-time
signals. Because a filterbank is a linear system, the filterbank representation
of the input is 
\begin{equation}
x[t,b]=s_{1}[t,b]+s_{2}[t,b],\quad b=1,\dots,B,\label{eq:band_mixture}
\end{equation}
where $s_{1}[t,b]$ and $s_{2}[t,b]$ are the filterbank representations
of $\tilde{s}_{1}[t]$ and $\tilde{s}_{2}[t]$, respectively.

Because envelope detection is a nonlinear process, the additivity
property of Eq. (\ref{eq:band_mixture}) does not hold in general
for the signal envelopes measured by practical envelope detectors.
However, to simplify our analysis, the signal envelopes can be \emph{modeled}
as obeying additivity. 
\begin{assumption}
\label{assu:envelopes}The envelopes $v_{s_{1}}[t,b]$, $v_{s_{2}}[t,b]$,
and $v_{x}[t,b]$ of $s_{1}[t,b]$, $s_{2}[t,b]$, and $x[t,b]$ satisfy
\begin{equation}
v_{x}[t,b]=v_{s_{1}}[t,b]+v_{s_{2}}[t,b],\quad b=1,\dots,B.
\end{equation}
\end{assumption}

To justify this assumption, suppose that $s_{1}[t,b]$ and $s_{2}[t,b]$
are sample functions of zero-mean random processes that are 
uncorrelated with each other. Then $\mathbb{E}\left[|x[t,b]|^{2}\right]=\mathbb{E}\left[|s_{1}[t,b]|^{2}\right]+\mathbb{E}\left[|s_{2}[t,b]|^{2}\right]$,
where $\mathbb{E}$ denotes the statistical expectation. If $v_{x}[t,b]$
were any linear transformation of the sequence $\mathbb{E}\left[|x[t,b]|^{2}\right]$,
then the envelopes would satisfy Assumption \ref{assu:envelopes}.
This statistical model is useful because the instantaneous level $\left|x[t,b]\right|^{2}$
from Eq. (\ref{eq:env_det}) can be thought of as an estimator of
the variance $\mathbb{E}\left[|x[t,b]|^{2}\right]$. If $\beta_{\mathrm{a}}=\beta_{\mathrm{r}}$,
then the recursive filter would be a linear transformation of this
estimator. This model does not reflect the peak-tracking behavior
of envelope detectors with $\beta_{\mathrm{a}}\ne\beta_{\mathrm{r}}$.
However, the simulation experiments in this work do include peak tracking.

\subsection{Output model}

Care is also required in analyzing the components of the output of
a nonlinear system. Let $\tilde{y}[t]=\tilde{r}_{1}[t]+\tilde{r}_{2}[t]$,
where $\tilde{r}_{1}[t]$ is the component of the output corresponding
to $\tilde{s}_{1}[t]$ and $\tilde{r}_{2}[t]$ is the component corresponding
to $\tilde{s}_{2}[t]$. For systems with the additivity property,
like linear filters, these components can be calculated by applying
the same system to $\tilde{s}_{1}$ and $\tilde{s}_{2}$. For nonlinear
systems like DRC, each component of the output depends on both components
of the input, so it can be difficult to meaningfully decompose the
output into distinct components. For example, in certain musical
genres, DRC is used to produce deliberately strong distortion so that
the original signals are barely recognizable in the output. In hearing
aids, however, the effects of DRC should be perceptually transparent:
The distortion should be subtle enough that an approximate notion
of additivity can apply.

In this work, the output components are determined by calculating
the level-dependent amplification sequence $g[t,b]$ based on the
mixture $x[t,b]$, then applying it to each component:
\begin{align}
y[t,b] & =g[t,b]x[t,b]\\
 & =g[t,b]\left(s_{1}[t,b]+s_{2}[t,b]\right)\\
 & =\underbrace{g[t,b]s_{1}[t,b]}_{r_{1}[t,b]}+\underbrace{g[t,b]s_{2}[t,b]}_{r_{2}[t,b]}
\end{align}
for all time indices $t$ and channels $b=1,\dots,B$. This decomposition
is used in the mathematical analysis below. Similarly, in the software
simulations presented here, the separate input signals are stored
in memory alongside their mixture and the amplification sequence is
applied separately to each, allowing the output components to be computed
exactly. In laboratory experiments with real hearing aids, many researchers
use a linearization technique known as phase inversion \citep{hagerman2004method}
to estimate the output components due to each source signal.

\subsection{Effective compression function}

\begin{figure}
\begin{centering}
\includegraphics{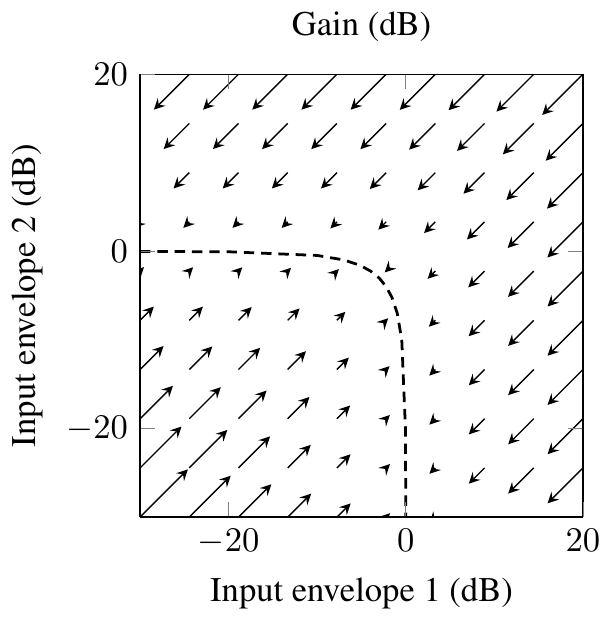}
\par\end{centering}
\caption{\label{fig:quiver}Gain applied to a mixture signal
as a function of the signal envelopes $v_{s_{1}}[t,b]$ and $v_{s_{2}}[t,b]$
for $\mathcal{C}_{b}(v)=v^{1/3}$ under Assumption \ref{assu:envelopes}.
The length of the arrows is proportional to the power gain $g^2[t,b]$ in dB and
the dashed curve shows the equilibrium mixture level $\mathcal{C}_{b}(v_{s_{1}}+v_{s_{2}})=v_{s_{1}}+v_{s_{2}}$.
The axes are scaled so that this equilibrium level is 0 dB.}

\end{figure}

The additive models for the envelope and output signals, while imperfect,
allow us to study the dominant source of nonlinearity in a DRC system:
the compression function. Although the signals $s_{1}[t,b]$ and $s_{2}[t,b]$
may have different levels, the amplification $g[t,b]$ applied to
both of them is the same and is computed from the overall level of
the input signal:
\begin{align}
g[t,b] & =\sqrt{\frac{\mathcal{C}_{b}(v_{x}[t,b])}{v_{x}[t,b]}}.
\end{align}
Under Assumption \ref{assu:envelopes}, the amplification is 
\begin{equation}
g[t,b]=\sqrt{\frac{\mathcal{C}_{b}(v_{s_{1}}[t,b]+v_{s_{2}}[t,b])}{v_{s_{1}}[t,b]+v_{s_{2}}[t,b]}},
\end{equation}
resulting in the output levels 
\begin{align}
v_{r_{1}}[t,b] & =\frac{\mathcal{C}_{b}(v_{s_{1}}[t,b]+v_{s_{2}}[t,b])}{v_{s_{1}}[t,b]+v_{s_{2}}[t,b]}v_{s_{1}}[t,b]\label{eq:env_out_1}\\
v_{r_{2}}[t,b] & =\frac{\mathcal{C}_{b}(v_{s_{1}}[t,b]+v_{s_{2}}[t,b])}{v_{s_{1}}[t,b]+v_{s_{2}}[t,b]}v_{s_{2}}[t,b],\label{eq:env_out_2}
\end{align}
for channels $b=1,\dots,B$. The gain and therefore the output levels
are functions of both input signal levels, as illustrated in Fig.
\ref{fig:quiver}. The gain applied to $s_{1}[t,b]$ in the presence
of $s_{2}[t,b]$ is weaker than it would have been for $s_{1}[t,b]$
alone. To characterize this effect, we can define an effective compression
function that relates the input and output levels of one signal in
the presence of another.
\begin{defn}
\label{def:ecf}The \emph{effective compression function }(ECF) $\hat{\mathcal{C}}_{b}(v_{1}|v_{2})$
applied to a signal with level $v_{1}>0$ in the presence of a signal
with level $v_{2}\ge0$ is given by
\begin{equation}
\hat{\mathcal{C}}_{b}(v_{1}|v_{2})=\frac{\mathcal{C}_{b}(v_{1}+v_{2})}{v_{1}+v_{2}}v_{1},
\end{equation}
where $\mathcal{C}_{b}(v)$ is the compression function applied to
the mixture level $v_{1}+v_{2}$.
\end{defn}
Using this definition, Eqs. (\ref{eq:env_out_1}) and (\ref{eq:env_out_2})
become
\begin{align}
v_{r_{1}}[t,b] & =\hat{\mathcal{C}}_{b}(v_{s_{1}}[t,b]|v_{s_{2}}[t,b])\\
v_{r_{2}}[t,b] & =\hat{\mathcal{C}}_{b}(v_{s_{2}}[t,b]|v_{s_{1}}[t,b])
\end{align}
for $b=1,\dots,B$. The ECF expresses the dependence between the levels
of the two signal components. It can be used to mathematically characterize
the distortion introduced by DRC systems in noisy environments, including
the across-source modulation effect, the effective compression ratio,
and the signal-to-noise ratio.

\section{\label{sec:across_source}Across-Source Modulation Distortion}

Dynamic range compression creates distortion in mixtures because the
presence of one signal alters the gain applied to another signal.
It has been observed experimentally \citep{stone2004side,stone2007quantifying,stone2008effects,alexander2015effects}
that when two signals are mixed together and passed through a compressor,
their output envelopes become negatively correlated: As one sound
becomes louder, the other sound becomes quieter. The across-source
modulation coeffient, a measure of this negative correlation, was
found to be correlated with reduced speech intelligibility \citep{stone2007quantifying,stone2008effects}.

\subsection{Output levels are anticorrelated}

The ECF can be used to study the across-source modulation phenomenon
mathematically. Specifically, if the input envelopes $v_{s_{1}}[t,b]$
and $v_{s_{2}}[t,b]$ are statistically independent, then the covariance
between the output levels in each channel is negative:
\begin{equation}
\boxed{\mathrm{Cov}(v_{r_{1}}[t,b],v_{r_{2}}[t,b])\le0,\quad b=1,\dots,B.}\label{eq:output_cov}
\end{equation}

To prove this, we first show that the ECF is nondecreasing in one
envelope and nonincreasing in the other. Because lemmas
and theorems in this work follow from the properties of compression
functions, which do not depend directly on time or frequency, the
time and channel indices $[t,b]$ are omitted in their statements
and proofs.
\begin{lem}
\label{lem:ecf_nondecreasing}Any effective compression function $\hat{\mathcal{C}}(v_{1}|v_{2})$
is nondecreasing in $v_{1}$ and nonincreasing in $v_{2}$ for $v_{1},v_{2}\ge0$.
\end{lem}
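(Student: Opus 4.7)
The plan is to exploit two equivalent factorizations of the ECF,
\[
\hat{\mathcal{C}}(v_1|v_2) \;=\; \mathcal{C}(v_1+v_2)\cdot\frac{v_1}{v_1+v_2} \;=\; v_1\cdot\frac{\mathcal{C}(v_1+v_2)}{v_1+v_2},
\]
and to handle the two monotonicity claims separately, using only nondecreasingness of $\mathcal{C}$ for the $v_1$ direction and concavity of $\mathcal{C}$ for the $v_2$ direction. No differentiability will be invoked, which is consistent with Definition \ref{def:cf} making no such assumption.

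For the first claim, I would fix $v_2 \ge 0$ and use the left-hand factorization. Both factors are nonnegative: $\mathcal{C}(v_1+v_2)$ is nondecreasing in $v_1$ because $\mathcal{C}$ itself is nondecreasing, and $v_1/(v_1+v_2) = 1 - v_2/(v_1+v_2)$ is nondecreasing in $v_1$ for any fixed $v_2$. A product of nonnegative nondecreasing functions is nondecreasing, which gives the first half of the lemma with essentially no work.

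For the second claim, I would fix $v_1 > 0$ and use the right-hand factorization. The prefactor $v_1$ does not depend on $v_2$, and $u := v_1 + v_2$ is increasing in $v_2$, so it suffices to show that $u \mapsto \mathcal{C}(u)/u$ is nonincreasing on $(0,\infty)$. This is the familiar secant-slope-through-the-origin property of concave nonnegative functions: for $0 < u_1 \le u_2$, write $u_1 = \lambda u_2 + (1-\lambda)\cdot 0$ with $\lambda = u_1/u_2 \in (0,1]$; concavity gives $\mathcal{C}(u_1) \ge \lambda\,\mathcal{C}(u_2) + (1-\lambda)\mathcal{C}(0)$, the nonnegativity of $\mathcal{C}$ discards the second term, and dividing by $u_1$ yields $\mathcal{C}(u_1)/u_1 \ge \mathcal{C}(u_2)/u_2$.

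The only genuinely delicate step is that last appeal to $\mathcal{C}(0) \ge 0$, because Definition \ref{def:cf} only stipulates properties at $v > 0$. I would patch this by setting $\mathcal{C}(0) := \lim_{v\downarrow 0}\mathcal{C}(v)$, which exists in $[0,\infty)$ by monotonicity and nonnegativity of $\mathcal{C}$ on $(0,\infty)$; the concavity inequality then passes to the limit unchanged. Everything else is bookkeeping, so I do not expect any substantive obstacle beyond this boundary detail.
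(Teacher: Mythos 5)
Your proof is correct and follows essentially the same route as the paper: the same two factorizations of $\hat{\mathcal{C}}(v_1|v_2)$, the product-of-nonnegative-nondecreasing-functions argument for monotonicity in $v_1$, and the fact that $\mathcal{C}(v)/v$ is nonincreasing for concave nonnegative $\mathcal{C}$ for monotonicity in $v_2$. The only difference is that you supply the secant-slope justification and the boundary limit at $v=0$, which the paper simply asserts; both additions are sound.
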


\begin{proof}
Because $\mathcal{C}(v)$ is nondecreasing and $v_{2}$ is nonnegative,
$\hat{\mathcal{C}}(v_{1}|v_{2})=\mathcal{C}(v_{1}+v_{2})\frac{v_{1}}{v_{1}+v_{2}}$
is the product of two nondecreasing functions of $v_{1}$ and is therefore
nondecreasing. Because $\mathcal{C}(v)$ is concave and nonnegative,
$\mathcal{C}(v)/v$ is nonincreasing for $v>0$. Then $\mathcal{C}(v_{1}+v_{2})/(v_{1}+v_{2})$
is nonincreasing in $v_{2}$.
\end{proof}
Next, we will need the following result about functions of random
variables.
\begin{lem}
\label{lem:decreasing}If $f(x)$ is nondecreasing, $g(x)$ is nonincreasing,
$X$ is a random variable, and $\mathbb{E}[f(X)]$, $\mathbb{E}[g(X)]$,
and $\mathbb{E}[f(X)g(X)]$ exist, then 
\begin{equation}
\mathbb{E}\left[f(X)g(X)\right]\le\mathbb{E}[f(X)]\mathbb{E}[g(X)].
\end{equation}
\end{lem}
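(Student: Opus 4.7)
The plan is to use the classical coupling trick for correlation inequalities of monotone functions (sometimes attributed to Chebyshev, and a special case of the FKG / Harris inequality). The idea is to introduce an independent copy $Y$ of $X$ and exploit the sign of $(f(X)-f(Y))(g(X)-g(Y))$.

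First, I would let $Y$ be a random variable independent of $X$ and identically distributed to it. Because $f$ is nondecreasing and $g$ is nonincreasing, the two factors $f(X)-f(Y)$ and $g(X)-g(Y)$ always have opposite signs (both are zero when $X=Y$): if $X\ge Y$ then $f(X)\ge f(Y)$ and $g(X)\le g(Y)$, and if $X\le Y$ the inequalities reverse. Hence pointwise
\begin{equation}
\bigl(f(X)-f(Y)\bigr)\bigl(g(X)-g(Y)\bigr)\le 0.
\end{equation}

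Next I would expand this product into the four cross terms $f(X)g(X)$, $f(X)g(Y)$, $f(Y)g(X)$, $f(Y)g(Y)$, take expectations, and use independence of $X$ and $Y$ together with the fact that $X$ and $Y$ have the same distribution. This collapses the expectation into $2\mathbb{E}[f(X)g(X)] - 2\mathbb{E}[f(X)]\mathbb{E}[g(X)]\le 0$, which rearranges to the claimed inequality.

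The only delicate point is integrability: the cross terms $\mathbb{E}[f(X)g(Y)]$ and $\mathbb{E}[f(Y)g(X)]$ must exist so that linearity of expectation can be applied, but this follows from the assumed existence of $\mathbb{E}[f(X)]$, $\mathbb{E}[g(X)]$, and $\mathbb{E}[f(X)g(X)]$ via independence (for the product terms) and Fubini/Tonelli. Since the inequality $(f(X)-f(Y))(g(X)-g(Y))\le 0$ holds almost surely, no sign issues arise when taking the expectation. I do not anticipate a real obstacle here; the main subtlety is simply recognizing the coupling trick, after which the argument is a short calculation.
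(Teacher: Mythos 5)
Your proposal is correct and is essentially the paper's own argument: the paper also starts from the pointwise inequality $[f(x)-f(y)][g(x)-g(y)]\le 0$ and integrates against two independent copies of the distribution, merely carrying out the symmetrization by splitting the double integral over $\{y<x\}$ and $\{x<y\}$ and swapping variable names rather than by expanding $\mathbb{E}\left[(f(X)-f(Y))(g(X)-g(Y))\right]$ into four cross terms as you do. Your handling of integrability of the cross terms via independence is also sound, so there is nothing to fix.
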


\begin{proof}
See Appendix \ref{sec:proof_decreasing}.
\end{proof}
We can now prove that independent envelopes become negatively correlated
when compressed.
\begin{thm}
\label{thm:cov}If $\hat{\mathcal{C}}(v_{1}|v_{2})$ is an effective
compression function and $V_{1}$ and $V_{2}$ are independent random
variables, then 
\begin{equation}
\mathrm{Cov}\left(\mathcal{\hat{C}}(V_{1}|V_{2}),\mathcal{\hat{C}}(V_{2}|V_{1})\right)\le0.
\end{equation}
\end{thm}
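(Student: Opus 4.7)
The plan is to reduce the theorem to two successive applications of Lemma \ref{lem:decreasing}, using the monotonicity properties of the ECF established in Lemma \ref{lem:ecf_nondecreasing}. Writing the covariance as $\mathbb{E}[\hat{\mathcal{C}}(V_1|V_2)\hat{\mathcal{C}}(V_2|V_1)]-\mathbb{E}[\hat{\mathcal{C}}(V_1|V_2)]\mathbb{E}[\hat{\mathcal{C}}(V_2|V_1)]$, it suffices to show that the joint expectation on the left is bounded above by the product on the right.

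The first step is to condition on $V_2$. By Lemma \ref{lem:ecf_nondecreasing}, for any fixed value $V_2=v_2$, the map $v_1\mapsto\hat{\mathcal{C}}(v_1|v_2)$ is nondecreasing, while $v_1\mapsto\hat{\mathcal{C}}(v_2|v_1)$ is nonincreasing. Because $V_1$ and $V_2$ are independent, the conditional distribution of $V_1$ given $V_2=v_2$ equals the marginal distribution of $V_1$, so Lemma \ref{lem:decreasing} applies with $X=V_1$ and yields
\begin{equation}
\mathbb{E}\bigl[\hat{\mathcal{C}}(V_1|v_2)\hat{\mathcal{C}}(v_2|V_1)\bigr]\le h_1(v_2)\,h_2(v_2),
\end{equation}
where I define $h_1(v_2)=\mathbb{E}[\hat{\mathcal{C}}(V_1|v_2)]$ and $h_2(v_2)=\mathbb{E}[\hat{\mathcal{C}}(v_2|V_1)]$.

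The second step is to take the outer expectation over $V_2$. Differentiating under the integral is not needed; monotonicity of the integrand in $v_2$ suffices. Specifically, $\hat{\mathcal{C}}(v_1|v_2)$ is nonincreasing in $v_2$ for each fixed $v_1$, so $h_1$ is nonincreasing; and $\hat{\mathcal{C}}(v_2|v_1)$ is nondecreasing in $v_2$ for each fixed $v_1$, so $h_2$ is nondecreasing. Applying Lemma \ref{lem:decreasing} a second time, with $X=V_2$, gives $\mathbb{E}[h_1(V_2)h_2(V_2)]\le\mathbb{E}[h_1(V_2)]\mathbb{E}[h_2(V_2)]$. By the tower property and independence, the right-hand side is exactly $\mathbb{E}[\hat{\mathcal{C}}(V_1|V_2)]\mathbb{E}[\hat{\mathcal{C}}(V_2|V_1)]$, so combining the two steps yields the desired covariance inequality.

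The conceptual obstacle is really packaged inside Lemma \ref{lem:decreasing} (an FKG/Chebyshev correlation inequality), so once that is in hand the argument is essentially bookkeeping. The only subtle point in the proof itself is remembering to use independence \emph{twice}: once to justify that conditioning on $V_2$ leaves the law of $V_1$ unchanged, and once to factor the product $\mathbb{E}[h_1(V_2)]\mathbb{E}[h_2(V_2)]$ into the advertised form of the covariance. I would assume the integrability hypotheses needed to invoke Lemma \ref{lem:decreasing} at each step hold, as is implicit in stating the covariance.
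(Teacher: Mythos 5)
Your proposal is correct and follows essentially the same route as the paper's proof: condition on $V_{2}$, apply Lemma \ref{lem:decreasing} to the inner expectation over $V_{1}$ using the monotonicity from Lemma \ref{lem:ecf_nondecreasing}, then apply Lemma \ref{lem:decreasing} a second time to the resulting functions of $V_{2}$. Your explicit remarks about where independence is used are a nice clarification but do not change the argument.
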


\begin{proof}
Because $\mathrm{Cov}\left(\hat{\mathcal{C}}(V_{1}|V_{2}),\hat{\mathcal{C}}(V_{2}|V_{1})\right)=\mathbb{E}\left[\hat{\mathcal{C}}(V_{1}|V_{2})\hat{\mathcal{C}}(V_{2}|V_{1})\right]-\mathbb{E}\left[\hat{\mathcal{C}}(V_{1}|V_{2})\right]\mathbb{E}\left[\hat{\mathcal{C}}(V_{2}|V_{1})\right]$,
it is sufficient to show that
\begin{equation}
\mathbb{E}\left[\hat{\mathcal{C}}(V_{1}|V_{2})\hat{\mathcal{C}}(V_{2}|V_{1})\right]\le\mathbb{E}\left[\hat{\mathcal{C}}(V_{1}|V_{2})\right]\mathbb{E}\left[\hat{\mathcal{C}}(V_{2}|V_{1})\right].
\end{equation}
From Lemma \ref{lem:ecf_nondecreasing}, $\hat{\mathcal{C}}(v_{1}|v_{2})$
is a nondecreasing function of $v_{1}$ and a nonincreasing function
of $v_{2}$. From iterated expectation and application of Lemma \ref{lem:decreasing},
we have
\begin{align}
 & \mathbb{E}\left[\hat{\mathcal{C}}(V_{1}|V_{2})\hat{\mathcal{C}}(V_{2}|V_{1})\right]\nonumber \\
 & =\mathbb{E}_{V_{2}}\left[\mathbb{E}_{V_{1}}\left[\hat{\mathcal{C}}(V_{1}|V_{2})\hat{\mathcal{C}}(V_{2}|V_{1})\mid V_{2}\right]\right]\\
 & \le\mathbb{E}_{V_{2}}\left[\mathbb{E}_{V_{1}}\left[\hat{\mathcal{C}}(V_{1}|V_{2})|V_{2}\right]\mathbb{E}_{V_{1}}\left[\hat{\mathcal{C}}(V_{2}|V_{1})|V_{2}\right]\right].
\end{align}
Now, because $V_{1}$ and $V_{2}$ are independent, $\mathbb{E}_{V_{1}}[\hat{\mathcal{C}}(V_{1}|V_{2})|V_{2}]$
is a nonincreasing function of $V_{2}$ and $\mathbb{E}_{V_{1}}[\hat{\mathcal{C}}(V_{2}|V_{1})|V_{2}]$
is a nondecreasing function of $V_{2}$. Applying Lemma \ref{lem:decreasing}
once more, 
\begin{align}
\mathbb{E}\left[\hat{\mathcal{C}}(V_{1}|V_{2})\hat{\mathcal{C}}(V_{2}|V_{1})\right] & \le\mathbb{E}_{V_{2}}\left[\mathbb{E}_{V_{1}}\left[\hat{\mathcal{C}}(V_{1}|V_{2})|V_{2}\right]\right]\nonumber \\
 & \quad\cdot\mathbb{E}_{V_{2}}\left[\mathbb{E}_{V_{1}}\left[\hat{\mathcal{C}}(V_{2}|V_{1})|V_{2}\right]\right]\\
 & =\mathbb{E}\left[\hat{\mathcal{C}}(V_{1}|V_{2})\right]\mathbb{E}\left[\hat{\mathcal{C}}(V_{2}|V_{1})\right].
\end{align}
\end{proof}
For linear gain, the theorem holds with equality because $\hat{\mathcal{C}}(v_{1}|v_{2})$
does not depend on $v_{2}$. The magnitude of the negative correlation
depends on the compression function; stronger compression causes the
ECFs and the conditional expectations to increase or decrease more
quickly, resulting in a stronger negative correlation. The channel
structure and time constants of the envelope detector affect the correlation
indirectly by altering the distributions of $V_{1}$ and $V_{2}$.

\subsection{Experiments}

\begin{figure}
\begin{centering}
\includegraphics{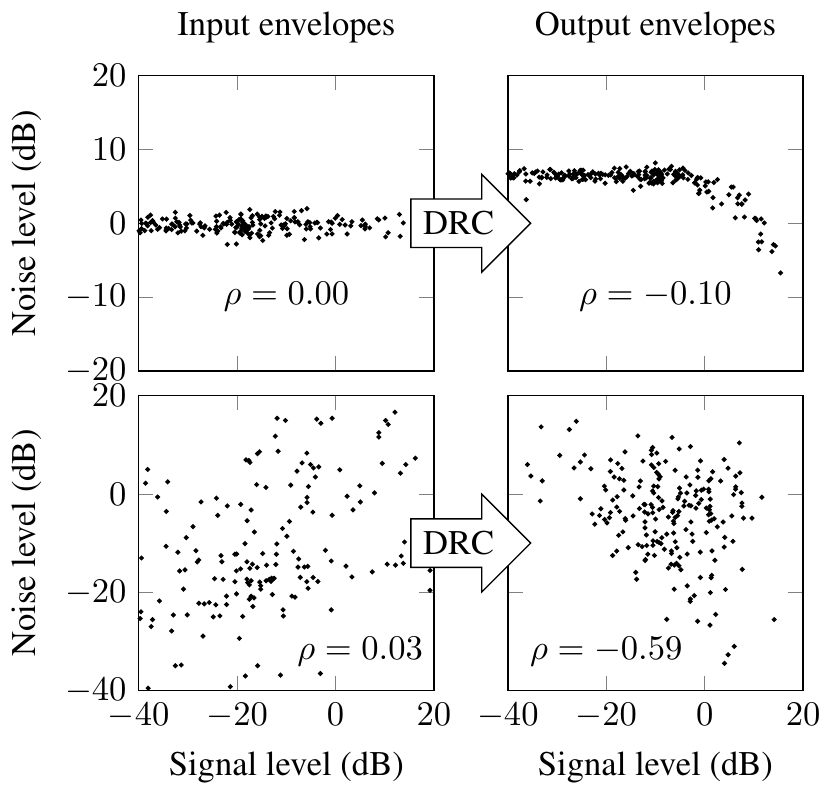}
\par\end{centering}
\caption{\label{fig:asmc}Sample input and output envelope pairs for mixtures
of two signals in a DRC system. Top: Speech and white noise. Bottom:
Speech and speech.}
\end{figure}

The ECF models the effects of level-dependent amplification for idealized
statistical envelopes, but does not perfectly represent the behavior
of real envelope detectors, especially their peak-tracking
dynamics. To verify the predicted result under more realistic conditions,
an experiment was conducted using mixtures of speech signals from
the VCTK database \citep{Veaux2017} and white Gaussian noise. All
signals have the same overall level so that the input SNR is 0 dB.
The DRC system has a constant compression ratio of 3:1, that is, a
cube-root compression function, in all channels; the filterbank uses
6 Mel-spaced bands from 0 to 8 kHz; and the envelope detector is the
filter from Eq. (\ref{eq:env_det}) with an attack
time of 10 ms and a release time of 50 ms as defined by ANSI S3.22-1996\nocite{ansi1996}
($\beta_{\text{a}}=0.978$ and $\beta_{\text{r}}=0.996$ at a sample
rate of 16 kHz). The input and output envelope samples were measured
using the same envelope detector applied to $s_{1}[t,b]$, $s_{2}[t,b]$,
$r_{1}[t,b]$, and $r_{2}[t,b]$.

Figure \ref{fig:asmc} illustrates the anticorrelation effects of
DRC. Each plot shows pairs of envelope samples for two signals in
a mixture: the input mixtures $(v_{s_{1}}[t,b],v_{s_{2}}[t,b])$ on
the left and the output mixtures $(v_{r_{1}}[t,b],v_{r_{2}}[t,b])$
on the right. The top plots are for speech and white noise and the
bottom plots are for two speech signals. The axes have been normalized
so that the average input level is 0 dB. The correlation coefficient
$\rho$ between samples is computed using the method of \citet{stone2007quantifying}
and averaged across the six channels. The input levels are mostly
uncorrelated between the two component signals, but the DRC system
shifts the levels according to a vector field similar to that of Fig.
\ref{fig:quiver}, producing correlated output points. Because the
white noise has nearly constant envelope, the effect of DRC is most
apparent at high speech signal levels. When the speech signal is strong,
both speech and noise are attenuated, bending the distribution of
level pairs downward and producing a negative correlation. The DRC
system has a similar effect on the speech mixture; each signal's level
decreases as the level of the other increases, producing a large negative
correlation. In an experiment with similar compression parameters
but a higher input SNR, \citet{stone2008effects} found an output
correlation coefficient of $-0.11$ between speech signals.

It has been shown that the correlation effect is weaker for slower-acting
compression, which averages signal levels over hundreds of milliseconds
and therefore does not vary gain levels as quickly \citep{stone2007quantifying,alexander2015effects}.
In a slow-acting compressor, the envelope samples ($v_{s_{1}}[t,b],v_{s_{2}}[t,b]$)
are not as widely spread in the $(v_{1},v_{2})$ plane and so they
are less distorted by the gain field illustrated in Fig. \ref{fig:quiver}.

\section{\label{sec:effective_compression}Effective Compression Performance}

The interaction between signals in a DRC system not only introduces
distortion to the signal envelopes, it also makes the compressor itself
less effective. Intuitively, if a signal of interest is weaker than
a noise source, then the noise level will determine the gain applied
to both signals and the target signal will not be compressed. Even
when the target signal has a higher level, the noise will cause the
gain to decrease more slowly than it should with respect to the target
level.

To quantify the effect of noise on compression performance, we can
measure the change in the output level of the target signal in response
to a change in its input level and compare that relationship to the
nominal compression ratio. In the hearing literature, it has been
observed experimentally that the effective compression ratio of a
DRC system is reduced in the presence of noise \citep{souza2006measuring}.
In this section we prove the equivalent result
that noise increases the effective compression slope, defined as the
log-log slope of the ECF.
\begin{defn}
\label{def:ecs}If $\hat{\mathcal{C}}_{b}(v_{1}|v_{2})$ is differentiable
with respect to $v_{1}$, then the \emph{effective compression slope}
$\hat{\mathrm{CS}}_{b}(v_{1}|v_{2})$ is given by 
\begin{align}
\hat{\mathrm{CS}}_{b}(v_{1}|v_{2}) & =\frac{\partial}{\partial u}\hat{\mathcal{C}}_{b}(e^{u}|v_{2})|_{u=\ln v_{1}}\\
 & =\frac{\frac{\partial}{\partial v_{1}}\hat{\mathcal{C}}_{b}(v_{1}|v_{2})}{\hat{\mathcal{C}}_{b}(v_{1}|v_{2})}v_{1}.
\end{align}
\end{defn}

\subsection{Noise reduces compression performance}

Using the properties of the ECF, it can be shown that the effective
compression slope from Definition \ref{def:ecs} is larger than the
nominal compression slope from Definition \ref{def:cs}---equivalently,
the effective compression ratio is smaller than the nominal compression
ratio---meaning that in noise the system is less compressive on each
component signal than it would be without noise:
\begin{equation}
\boxed{\hat{\mathrm{CS}}_{b}(v_{s_{1}}|v_{s_{2}})\ge\mathrm{CS}_{b}(v_{s_{1}}+v_{s_{2}}),\quad b=1,\dots,B.}
\end{equation}
The result follows from the concavity of the ECF.
\begin{thm}
\label{thm:ecs_comp}If a compression function $\mathcal{C}(v)$ is
differentiable at $v_{x}=v_{1}+v_{2}$, then its effective compression
slope satisfies
\begin{equation}
\hat{\mathrm{CS}}(v_{1}|v_{2})\ge\mathrm{CS}(v_{x}),
\end{equation}
with equality if $\mathcal{C}(v)$ is linear or if $v_{2}=0$.
\end{thm}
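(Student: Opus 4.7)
The plan is to reduce the inequality to a single-variable statement by computing the effective compression slope directly and recognizing it as a convex combination of two quantities with a clean ordering.

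First I would differentiate $\hat{\mathcal{C}}(v_1|v_2) = \mathcal{C}(v_1+v_2)\,v_1/(v_1+v_2)$ with respect to $v_1$, treating $v_2$ as a constant. Writing $v_x = v_1 + v_2$ and applying the product/quotient rule gives
\begin{equation}
\frac{\partial}{\partial v_1}\hat{\mathcal{C}}(v_1|v_2) = \frac{\mathcal{C}'(v_x)\,v_1}{v_x} + \frac{\mathcal{C}(v_x)\,v_2}{v_x^2}.
\end{equation}
Substituting this into Definition \ref{def:ecs} and simplifying, the $\mathcal{C}(v_x)$ and $v_1$ factors combine nicely to yield
\begin{equation}
\hat{\mathrm{CS}}(v_1|v_2) = \mathrm{CS}(v_x)\cdot\frac{v_1}{v_x} + 1\cdot\frac{v_2}{v_x},
\end{equation}
i.e., a convex combination of $\mathrm{CS}(v_x)$ and $1$, with weights determined by the relative levels of the two signals. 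This reformulation is the heart of the argument.

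From here the claim $\hat{\mathrm{CS}}(v_1|v_2) \ge \mathrm{CS}(v_x)$ is equivalent to $\mathrm{CS}(v_x) \le 1$, so my next step is to verify this bound from the definition of a compression function. Because $\mathcal{C}$ is concave on $[0,\infty)$ and $\mathcal{C}(0) \ge 0$, the tangent-line inequality at $v_x$ gives $\mathcal{C}(0) \le \mathcal{C}(v_x) - \mathcal{C}'(v_x)\,v_x$, hence $\mathcal{C}'(v_x)\,v_x \le \mathcal{C}(v_x)$, which is exactly $\mathrm{CS}(v_x) \le 1$. Combined with the convex-combination identity and the observation that $v_2/v_x \ge 0$, this immediately yields the desired inequality.

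For the equality cases: if $v_2 = 0$ the weight on the ``$1$'' term vanishes and the identity collapses to $\hat{\mathrm{CS}} = \mathrm{CS}(v_x)$; if $\mathcal{C}$ is linear then $\mathrm{CS}(v_x) = 1$ and the convex combination equals $1$ regardless of the weights. I expect the main obstacle to be purely bookkeeping—carrying out the derivative and simplification cleanly so that the convex-combination structure is visible. There is no genuinely hard step, since the concavity-based bound $\mathrm{CS} \le 1$ is the only nontrivial ingredient and follows in one line from Definition \ref{def:cf}.
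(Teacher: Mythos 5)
Your proof is correct and takes essentially the same route as the paper's: both compute $\partial\hat{\mathcal{C}}/\partial v_1$, reduce $\hat{\mathrm{CS}}(v_1|v_2)$ to the identity $\mathrm{CS}(v_x)\,\frac{v_1}{v_x}+\frac{v_2}{v_x}$ (the paper writes this as $\mathrm{CS}(v_x)$ plus the nonnegative remainder $\frac{v_2}{v_x\mathcal{C}(v_x)}\left(\mathcal{C}(v_x)-v_x\mathcal{C}'(v_x)\right)$ rather than as a convex combination with $1$), and rest on the same key inequality $\mathcal{C}(v_x)\ge v_x\mathcal{C}'(v_x)$, i.e.\ $\mathrm{CS}(v_x)\le 1$, which follows from concavity and nonnegativity. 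The equality cases are handled identically, so there is nothing to add.
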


\begin{proof}
Because $\mathcal{C}(v)$ is defined to be concave and nonnegative
for $v>0$, it follows that 
\begin{equation}
\mathcal{C}(v)-v\mathcal{C}'(v)\ge0
\end{equation}
for all $v$ at which $\mathcal{C}$ is differentiable, with equality
if $\mathcal{C}$ is linear. The effective compression slope is given
by
\begin{align}
\hat{\mathrm{CS}}(v_{1}|v_{2}) & =\frac{\frac{\partial}{\partial v_{1}}\hat{\mathcal{C}}(v_{1}|v_{2})}{\hat{\mathcal{C}}(v_{1}|v_{2})}v_{1}\\
 & =\frac{v_{x}}{\mathcal{C}(v_{x})}\left(\frac{\mathcal{C}'(v_{x})v_{1}+\mathcal{C}(v_{x})}{v_{x}}-\frac{\mathcal{C}(v_{x})v_{1}}{v_{x}^{2}}\right)\\
 & =\frac{\mathcal{C}'(v_{x})}{\mathcal{C}(v_{x})}v_{1}+1-\frac{v_{1}}{v_{x}}\\
 & =\frac{\mathcal{C}'(v_{x})}{\mathcal{C}(v_{x})}v_{x}-\frac{\mathcal{C}'(v_{x})}{\mathcal{C}(v_{x})}v_{2}+\frac{v_{2}}{v_{x}}\\
 & =\mathrm{CS}(v_{x})+\frac{v_{2}}{v_{x}\mathcal{C}(v_{x})}(\mathcal{C}(v_{x})-v_{x}\mathcal{C}'(v_{x}))\label{eq:cs_proof_line}\\
 & \ge\mathrm{CS}(v_{x})
\end{align}
with equality if $\mathcal{C}$ is linear or if $v_{2}=0$.
\end{proof}
Equation (\ref{eq:cs_proof_line}) illustrates that the effective
compression slope for a target signal increases with the level of
the interfering signal. In fact, in the limit as $v_{s_{1}}/v_{x}$
approaches 0, the effective compression slope approaches 1, so that
the system applies linear gain to the target signal. In the low-SNR
regime, the gain applied to both signals is determined by the level
of the interfering signal. The theorem shows, however, that even at
high SNR the compression effect is slightly reduced.

\subsection{Experiments}

\begin{figure}
\begin{centering}
\includegraphics{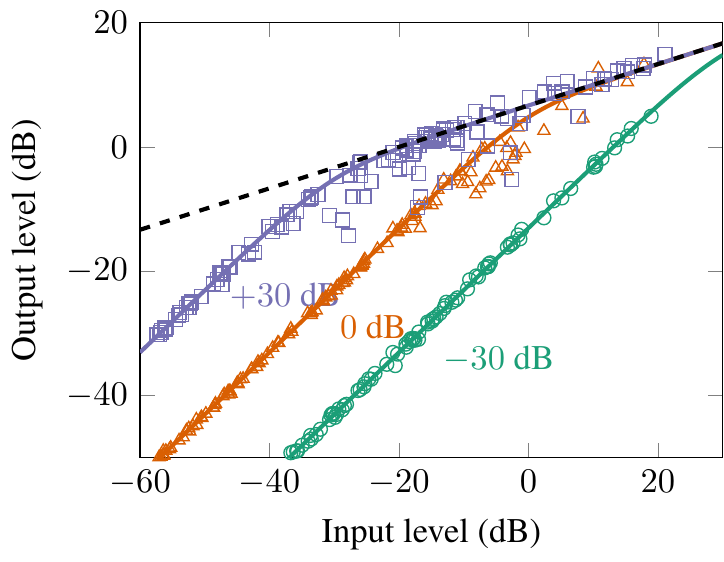}
\par\end{centering}
\caption{\label{fig:ecs}Effective compression function for speech and white
noise at different SNRs. The dashed line shows the nominal compression
function, the curves show the predicted effective compression functions,
and the plotted points show envelope samples measured from the simulated
DRC system.}
\end{figure}

Theorem \ref{thm:ecs_comp} shows that under an idealized model, additive
noise always reduces the effect of compression on a signal of interest.
The model, however, does not account for correlations between signals,
for the temporal dynamics of the envelope detector, or for peak-tracking
effects. To verify this result experimentally, a DRC system was simulated
using the same parameters as in the previous section. The signal of
interest is a speech recording and the interfering signal is white noise
with variable level.

Figure \ref{fig:ecs} shows the effective compression performance
of the system for three signal-to-noise ratios. The dashed
line shows the nominal compression function $\mathcal{C}_{b}(v)=v^{1/3}$
for all $b$. The solid curves are the theoretical effective compression
functions $\hat{\mathcal{C}}_{b}(v_{s_{1}}|v_{s_{2}})$ for constant
noise power $v_{s_{2}}$. The plotted points are samples of the measured
input and output envelopes across all channels. Each curve is nearly linear
at speech levels much lower than the
noise level and closely matches the nominal compression curve at speech
levels much higher than the noise level. At high SNR, the speech is
compressed correctly, while at low SNR, the gain is determined by
the noise level.

These results are consistent with previous work. \citet{souza2006measuring}
measured the effective compression ratio (ECR) empirically using a
simulated DRC system and the phase inversion method. The ECR was calculated
as the dynamic range of the input envelope divided by the dynamic
range of the output envelope, where the dynamic range is defined as
the difference between the 95th and 5th percentiles of sample levels
in the signals. The ratio was found to increase monotonically with
SNR, from 1.06 at $-2$ dB to $1.12$ in quiet for a nominal compression
ratio of 2:1. Using the method of \citet{souza2006measuring} and
averaging across signal bands, the ECRs from the experiments here
were 1.00 at $-30$ dB SNR, 1.14 at $0$ dB, and 1.69 at $+30$ dB
with a nominal compression ratio of 3:1.

\section{\label{sec:snr}Signal-to-Noise Ratio}

Of the three distortion effects discussed in this work, the impact
of DRC on long-term signal-to-noise ratio is both the most studied
empirically and the most challenging to analyze mathematically. The
levels of the signal components in the output of a compression system
depend on the input levels and temporal dynamics of both the signal
of interest and the noise. Using the phase inversion method,
\citet{souza2006measuring} found that simulated DRC processing reduced
the SNR of a speech signal in speech-shaped noise while linear processing
did not. \citet{naylor2009long} observed this effect in commercial
hearing aids and showed that it depends on the type of noise,
filterbank structure, and envelope time constants. \citet{brons2015acoustical}
and \citet{miller2017output} demonstrated the effect in more recent
hearing aids that include noise reduction algorithms, which are also
nonlinear and can interact with DRC in complex ways \citep{kortlang2018evaluation}.
Like the other nonlinear distortion effects described here, SNR reduction
appears to be more severe for fast compression \citep{alexander2015effects,may2018signal}.
\citet{reinhart2017effects} found that reverberant signals are less
affected. Listening tests suggest that fast-acting compression can
improve intelligibility with some types of noise but not others, and
that these effects depend on the input SNR \citep{rhebergen2009dynamic,rhebergen2017characterizing,kowalewski2018effects}. 

\subsection{Signal-to-noise ratio in constant-envelope noise}

While it is difficult to say much in general about the effect of compression
on SNR, we can prove a result for an important special case: a target
signal with a time-varying envelope and a noise signal with constant
envelope. Information-rich signals such as speech tend to vary rapidly
with time. Many classic speech enhancement algorithms, such as spectral
subtraction, rely on the assumption that the noise spectrum is constant
while the spectrum of the speech signal varies \citep{loizou2013speech}.
At times and frequencies where the input level is large, it is assumed
that speech is present and the signal is amplified, while at lower
levels the signal is attenuated to remove noise. Because these speech
enhancement systems amplify high-level signals and attenuate low-level
signals, they act as dynamic range \emph{expanders}.

If a dynamic range expander can improve SNR, it stands to reason that
a compressor might make it worse. To see why, let us analyze the effect
of compression on the average SNR over time. Because the envelope
is proportional to the power of a signal component, the average SNR
at the input is given by
\begin{equation}
\mathrm{SNR}_{\mathrm{in}}[b]=\frac{\mathrm{mean}_{t}v_{s_{1}}[t,b]}{\mathrm{mean}_{t}v_{s_{2}}[t,b]},\quad b=1,\dots,B,
\end{equation}
and the average SNR at the output is 
\begin{align}
\mathrm{SNR}_{\mathrm{out}}[b] & =\frac{\mathrm{mean}_{t}v_{r_{1}}[t,b]}{\mathrm{mean}_{t}v_{r_{2}}[t,b]}\\
 & =\frac{\mathrm{mean}_{t}\hat{\mathcal{C}}_{b}(v_{s_{1}}[t,b]|v_{s_{2}}[t,b])}{\mathrm{mean}_{t}\hat{\mathcal{C}}_{b}(v_{s_{2}}[t,b]|v_{s_{1}}[t,b])}.
\end{align}

If the compression function were linear, then the input and output
SNRs would be identical. For a concave compression function with convex
gain, it can be shown that, if the noise envelope is constant, then
the output SNR is lower than the input SNR:
\begin{equation}
\boxed{\mathrm{SNR}_{\mathrm{out}}[b]\le\mathrm{SNR}_{\mathrm{in}}[b],\quad b=1,\dots,B.}
\end{equation}

The proofs in this section rely on an additional technical condition
on the compression function. Not only must $\mathcal{C}_{b}(v)$ be
nonnegative and concave, the gain function $\mathcal{C}_{b}(v)/v$
must be convex. This condition is satisfied for many smooth compression
functions, including linear, power-law, and logarithmic, but not for
some functions with corners like that in Fig. \ref{fig:Compression-functions}.
This condition ensures that the effective compression function is
concave in its first argument and convex in its second.
\begin{lem}
\label{lem:convex}If $\mathcal{C}(v)$ is a compression function
and $\mathcal{C}(v)/v$ is convex for all $v>0$, then the effective
compression function $\hat{\mathcal{C}}(v_{1}|v_{2})$ is concave
in $v_{1}$ and convex in $v_{2}$.
\end{lem}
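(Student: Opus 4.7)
The plan is to reduce both claims to convexity/concavity statements about a single-variable function. Define $h(v) = \mathcal{C}(v)/v$ for $v>0$, so that the effective compression function can be written as
\begin{equation}
\hat{\mathcal{C}}(v_1|v_2) = v_1\, h(v_1+v_2).
\end{equation}
By hypothesis $h$ is convex, and by the argument already used in the proof of Lemma \ref{lem:ecf_nondecreasing} it is also nonincreasing. With this rewriting, convexity in $v_2$ is essentially immediate while concavity in $v_1$ requires one algebraic trick.

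For convexity in $v_2$, I would fix $v_1 \ge 0$ and observe that $v_2 \mapsto v_1 + v_2$ is affine, so $v_2 \mapsto h(v_1+v_2)$ is convex as the composition of a convex function with an affine map. Multiplying by the nonnegative constant $v_1$ preserves convexity, finishing this half.

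For concavity in $v_1$, the direct product $v_1\, h(v_1+v_2)$ is not obviously concave since $h$ is convex, so I would change variables to $w = v_1 + v_2$ (with $v_2$ held fixed, this is affine in $v_1$). Then
\begin{equation}
\hat{\mathcal{C}}(v_1|v_2) = (w - v_2)\, h(w) = w\, h(w) - v_2\, h(w) = \mathcal{C}(w) - v_2\, h(w).
\end{equation}
Now $\mathcal{C}(w)$ is concave by Definition \ref{def:cf}, and $v_2\, h(w)$ is convex because $h$ is convex and $v_2 \ge 0$, so $-v_2\, h(w)$ is concave. The sum of two concave functions of $w$ is concave, and since $w$ depends affinely on $v_1$, the composition is concave in $v_1$.

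The main obstacle I expect is just recognizing that the convexity hypothesis on $\mathcal{C}(v)/v$ is exactly what lets the $v_1\, h(v_1+v_2)$ product --- which looks like it might fail to be concave --- collapse into $\mathcal{C}(w) - v_2 h(w)$ after the substitution. Once that rewriting is in hand, both statements are one-line consequences of standard convexity rules, and no differentiability of $\mathcal{C}$ is needed.
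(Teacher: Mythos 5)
Your proof is correct and uses essentially the same argument as the paper: the identity $\hat{\mathcal{C}}(v_{1}|v_{2})=\mathcal{C}(v_{1}+v_{2})-v_{2}\,\mathcal{C}(v_{1}+v_{2})/(v_{1}+v_{2})$ for concavity in $v_{1}$, and composition of the convex gain with an affine map for convexity in $v_{2}$. Your presentation via the substitution $w=v_{1}+v_{2}$ is just a cleaner phrasing of the paper's explicit convex-combination computation; the incidental remark that $h$ is nonincreasing is unused but harmless.
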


\begin{proof}
See Appendix \ref{sec:proof_convex}.
\end{proof}
This property, along with the condition that the interference signal
has constant envelope, allows us to prove that the output SNR is no
larger than the input SNR.
\begin{thm}
\label{thm:snr}If $\mathcal{C}(v)$ is a compression function and
$\mathcal{C}(v)/v$ is convex for all $v>0$, $v_{1}[t]>0$ for all
$t$, and $v_{2}[t]=\bar{v}_{2}>0$ for all $t$, then
\begin{equation}
\mathrm{SNR}_{\mathrm{out}}\le\mathrm{SNR}_{\mathrm{in}}
\end{equation}
with equality if $v_{1}[t]$ is constant or $\mathcal{C}$ is linear.
\end{thm}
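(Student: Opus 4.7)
The plan is to reduce everything to two applications of Jensen's inequality to the time average, using the convexity/concavity of the effective compression function established in Lemma \ref{lem:convex}.

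First I would write out the output SNR as
\begin{equation}
\mathrm{SNR}_{\mathrm{out}} = \frac{\mathrm{mean}_t\,\hat{\mathcal{C}}(v_1[t]\mid \bar{v}_2)}{\mathrm{mean}_t\,\hat{\mathcal{C}}(\bar{v}_2\mid v_1[t])},
\end{equation}
noting that the noise envelope has been pulled out of the first argument in the denominator and the second argument in the numerator as the constant $\bar{v}_2$. By Lemma \ref{lem:convex}, $\hat{\mathcal{C}}(\,\cdot\mid \bar{v}_2)$ is concave, so Jensen's inequality yields an upper bound on the numerator of the form $\hat{\mathcal{C}}(\bar{v}_1\mid \bar{v}_2)$, where $\bar{v}_1 = \mathrm{mean}_t\,v_1[t]$. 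Likewise, $\hat{\mathcal{C}}(\bar{v}_2\mid\,\cdot)$ is convex in its second argument, so Jensen yields a lower bound on the denominator of $\hat{\mathcal{C}}(\bar{v}_2\mid \bar{v}_1)$. Combining them (upper bound on top, lower bound on the bottom) produces the single-point upper bound
\begin{equation}
\mathrm{SNR}_{\mathrm{out}} \le \frac{\hat{\mathcal{C}}(\bar{v}_1\mid \bar{v}_2)}{\hat{\mathcal{C}}(\bar{v}_2\mid \bar{v}_1)}.
\end{equation}

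Next I would substitute the definition $\hat{\mathcal{C}}(u\mid w)=\mathcal{C}(u+w)u/(u+w)$ into both the numerator and denominator at $(\bar{v}_1,\bar{v}_2)$. The common factor $\mathcal{C}(\bar{v}_1+\bar{v}_2)/(\bar{v}_1+\bar{v}_2)$ cancels, leaving $\bar{v}_1/\bar{v}_2$, which is precisely $\mathrm{SNR}_{\mathrm{in}}$ since $v_2[t]\equiv \bar{v}_2$. This establishes the inequality. For the equality cases, Jensen's inequality is tight when $v_1[t]$ is constant; when $\mathcal{C}$ is linear (in the sense $\mathcal{C}(v)=av$), $\hat{\mathcal{C}}(v_1\mid v_2)=av_1$ is affine in $v_1$ and independent of $v_2$, so Jensen is again tight.

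The main obstacle is only bookkeeping: making sure the direction of each Jensen inequality matches the position (numerator vs.\ denominator) so that they combine into an upper bound on the ratio, and verifying that it is the constancy of $v_2[t]$ specifically (not of $v_1[t]$) that allows the Jensen-bounding maneuver to be applied cleanly — if both envelopes varied, the numerator and denominator would involve coupled randomness in both arguments of $\hat{\mathcal{C}}$ and a single-variable Jensen bound would not suffice.
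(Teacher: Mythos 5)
Your proposal is correct and follows essentially the same route as the paper's proof: both apply Jensen's inequality separately to the numerator (using concavity of $\hat{\mathcal{C}}(\cdot\mid\bar{v}_2)$ from Lemma \ref{lem:convex}) and to the denominator (using convexity in the second argument), then cancel the common factor $\mathcal{C}(\bar{v}_1+\bar{v}_2)/(\bar{v}_1+\bar{v}_2)$ to recover $\bar{v}_1/\bar{v}_2=\mathrm{SNR}_{\mathrm{in}}$. The equality cases are handled the same way as well.
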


\begin{proof}
Since $v_{2}[t]$ is fixed, the output SNR can be written 
\begin{equation}
\mathrm{SNR}_{\mathrm{out}}=\frac{\mathrm{mean}_{t}\hat{\mathcal{C}}(v_{1}[t]|\bar{v}_{2})}{\mathrm{mean}_{t}\hat{\mathcal{C}}(\bar{v}_{2}|v_{1}[t])}.\label{eq:snr_out}
\end{equation}
The numerator is the mean over $t$ of a concave function of $v_{1}[t]$.
By Jensen's inequality \citep{cover2006elements}, 
\begin{equation}
\mathrm{mean}_{t}\hat{\mathcal{C}}(v_{1}[t]|\bar{v}_{2})\le\hat{\mathcal{C}}(\mathrm{mean}_{t}v_{1}[t]|\bar{v}_{2}),
\end{equation}
with equality when $\mathcal{C}$ is linear or $v_{1}[t]$ is constant.
Similarly, the denominator is the mean over $t$ of a convex function
of $v_{1}[t]$. Again applying Jensen's inequality,
\begin{equation}
\mathrm{mean}_{t}\hat{\mathcal{C}}(\bar{v}_{2}|v_{1}[t])\ge\hat{\mathcal{C}}(\bar{v}_{2}|\mathrm{mean}_{t}v_{1}[t]),
\end{equation}
with equality when $\mathcal{C}$ is linear or $v_{1}[t]$ is constant.
Let $\bar{v}_{1}=\mathrm{mean}_{t}v_{1}[t]$. Since the numerator
and denominator of (\ref{eq:snr_out}) are both positive, we have
\begin{align}
\mathrm{SNR}_{\mathrm{out}} & \le\frac{\hat{\mathcal{C}}(\bar{v}_{1}|\bar{v}_{2})}{\hat{\mathcal{C}}(\bar{v}_{2}|\bar{v}_{1})}\\
 & =\frac{\bar{v}_{1}\mathcal{C}(\bar{v}_{1}+\bar{v}_{2})/(\bar{v}_{1}+\bar{v}_{2})}{\bar{v}_{2}\mathcal{C}(\bar{v}_{1}+\bar{v}_{2})/(\bar{v}_{1}+\bar{v}_{2})}\\
 & =\frac{\bar{v}_{1}}{\bar{v}_{2}}\\
 & =\mathrm{SNR}_{\mathrm{in}}
\end{align}
with equality when $\mathcal{C}$ is linear or $v_{1}[t]$ is constant.
\end{proof}

\subsection{Experiments}

The equality condition in Theorem \ref{thm:snr} suggests that the
SNR-reducing effect of compression depends on the curvature of the
compression function. Figure \ref{fig:snr_ratio} compares the input
and output SNRs for speech in white noise at different compression
ratios. The simulations used a knee-shaped compression function like
that in Fig. \ref{fig:Compression-functions}. Although this compressor
does not meet the technical condition required
for Lemma \ref{lem:convex}, the results still show the behavior predicted
by the theorem. The SNR-reducing effect is greatest at high input
SNRs; at low SNRs, the noise level determines the gain and the effective
compression function is linear, so the SNR is not affected. Furthermore,
higher compression ratios have stronger effects. This phenomenon has
been observed in the hearing literature as well: \citet{rhebergen2009dynamic}
found similar input-output SNR curves for speech in stationary and
interrupted noise (see Fig. 8 of that work). \citet{naylor2009long}
obtained similar results using commercial hearing aids configured
with different compression ratios (see Fig. 4 of that work). 

\begin{figure}
\begin{centering}
\includegraphics{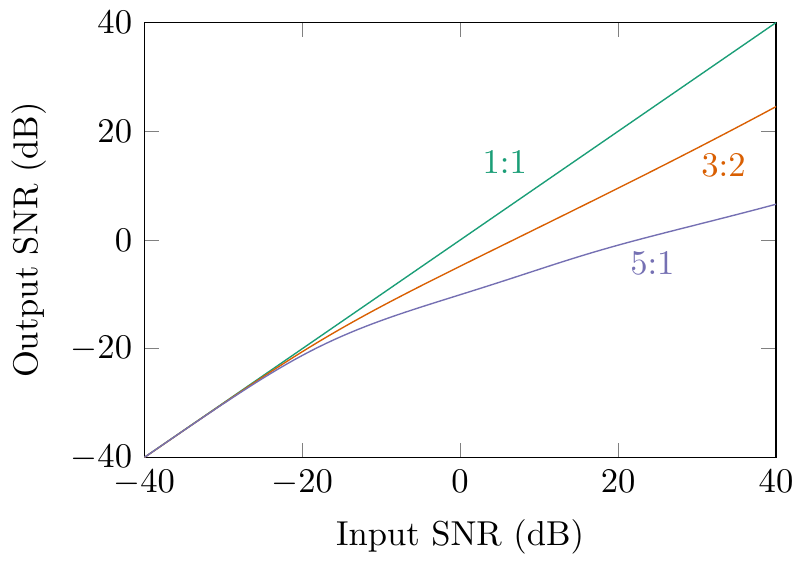}
\par\end{centering}
\caption{\label{fig:snr_ratio}Effect of DRC with different compression ratios
on long-term SNR of speech in white noise.}
\end{figure}

Theorem \ref{thm:snr} applies only to constant-envelope noise. While
this is an important special case, it does not reflect most real-world
sound mixtures. When the target and noise signals both vary with time,
the compressor reduces the dynamic range of both signals. The weaker
signal will be amplified more and the stronger signal less, pushing
their average output levels closer together.
Figure \ref{fig:noise_types} shows the results of the SNR experiment
with 3:1 knee-shaped compression and different noise types. With white
noise, the SNR is always reduced, as predicted by Theorem \ref{thm:snr}.
With speech babble, generated by mixing fourteen VCTK speech
clips, the SNR is slightly increased at low input SNRs. When the target
and interference signals are both single-talker speech signals, the
SNR is improved when it is negative but made worse when it is positive.

\begin{figure}
\begin{centering}
\includegraphics{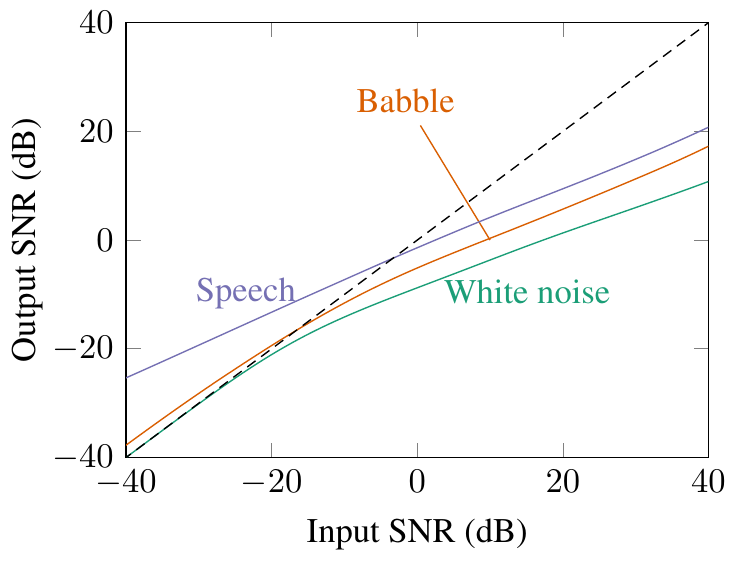}
\par\end{centering}
\caption{\label{fig:noise_types}Effect of compression on long-term signal-to-noise
ratio of mixtures of speech with different types of noise}
\end{figure}

These results align well with those observed in the hearing literature.
\citet{naylor2009long} found that output SNR is always reduced for
speech in unmodulated noise, greatly reduced at positive input SNR
and slightly reduced at negative input SNR for speech in modulated
noise, and symmetrically increased at negative input SNR and decreased
at positive input SNR for a mixture of two speech signals (see Fig.
3 of that work). \citet{reinhart2017effects} performed experiments
with different numbers of talkers and found a similar symmetric relationship
for a mixture of two talkers. The SNR improvement at negative input
SNRs declined with each additional interfering talker (see Fig. 3
of that work), consistent with the results for speech babble here.

\section{Conclusions}

The mathematical analysis above confirms the empirical evidence from
the hearing literature that DRC causes distortion in noise. The effects
of this distortion depend on the characteristics of the signals, especially
their relative levels. The across-source modulation effect is most
pronounced at SNRs near unity: when two signals are similar in level,
they modulate each other, resulting in a negative correlation between
their envelopes. At low SNR, the effective compression function for
the target signal becomes nearly linear and the dynamic range of that
signal is not changed. Compression algorithms are thus ineffective in 
challenging listening conditions where they would
presumably help most. Meanwhile, at high SNR, the signal of interest
is amplified by less than the noise, reducing average SNR.

Can anything be done to improve the performance of DRC systems in
noise? The analysis shows that all these effects are caused by the
concave curvature of the compression function, which is also what
makes the system compressive. It seems, then, that distortion
is inevitable whenever signals are compressed as a mixture. 

A possible solution is to compress the component signals of a mixture
independently, as music producers do when mixing instrumental
and vocal recordings. Listening tests have shown improved intelligibility
when signals are compressed before rather than after mixing \citep{stone2008effects,rhebergen2009dynamic}.
Of course, real hearing aids do not have access to the unmixed
source signals, so a practical multisource compression system must
perform source separation. \citet{hassager2017preserving} used a
single-microphone classification method to separate direct from reverberant
signal components, helping to preserve spatial cues that can be distorted
by DRC. \citet{may2018signal} proposed a single-microphone separation
system that applies fast-acting compression to speech components and
slow-acting compression to noise components; listening experiments
with an oracle separation algorithm improved both quality
and intelligibility \citep{kowalewski2020perceptual}. \citet{corey2017compression}
used a multimicrophone separation method to apply separate compression functions to
each of several competing speech signals. The output exhibited better
objective measures of across-source modulation distortion, effective
compression performance, and SNR compared to a conventional system.
It was later shown that larger wearable microphone arrays can provide
better multisource compression performance than the small arrays contained
in hearing aid earpieces \citep{corey2019thesis}. 

The mathematical tools introduced in this work can help
researchers to understand the distortion effects of conventional DRC
systems in noise and to devise new approaches to nonlinear processing
for mixtures of multiple signals. The effective compression function
models interactions between signal envelopes at the input and output
of a DRC system. However, it does not give insights about
the effects of the envelope detector or filterbank, which are known
to affect the magnitude of all three distortion effects. Further analysis 
could explain how channel structure and attack and release
times affect the distribution of envelope samples.

Like the human auditory system itself, dynamic range compression is
a complex nonlinear system that defies simple analysis. By
modeling how DRC systems work in the presence of noise, we can design
listening systems to help people hear better even in the most challenging
environments.

\appendix

\section{\label{sec:proof_decreasing}Proof of Lemma 2}
\begin{manualtheorem}{2}
If $f(x)$ is nondecreasing, $g(x)$ is nonincreasing, $X$ is a random
variable, and $\mathbb{E}[f(X)]$, $\mathbb{E}[g(X)]$, and $\mathbb{E}[f(X)g(X)]$
exist, then 
\begin{equation}
\mathbb{E}\left[f(X)g(X)\right]\le\mathbb{E}[f(X)]\mathbb{E}[g(X)].
\end{equation}
\end{manualtheorem}

\begin{proof}
Because $f(x)$ is nondecreasing and $g(x)$ is nonincreasing, for
every $x$ and $y$ we have 
\begin{equation}
\left[f(x)-f(y)\right]\left[g(x)-g(y)\right]\le0.
\end{equation}
It is sufficient to show that $\mathbb{E}[f(X)g(X)]-\mathbb{E}[f(X)]\mathbb{E}[g(X)]\le0$.
If $X$ has cumulative distribution function $P(x)$, then 
\begin{align}
 & \mathbb{E}[f(X)g(X)]-\mathbb{E}[f(X)]\mathbb{E}[g(X)]\nonumber \\
 & =\int_{x}f(x)g(x)\,\mathrm{d}P(x)-\int_{x}f(x)\,\mathrm{d}P(x)\int_{y}g(y)\,\mathrm{d}P(y)\\
 & =\int_{x}\int_{y}f(x)[g(x)-g(y)]\,\mathrm{d}P(y)\mathrm{d}P(x)\\
 & =\int_{x}\int_{y<x}f(x)[g(x)-g(y)]\,\mathrm{d}P(y)\mathrm{d}P(x)\nonumber \\
 & \quad+\int_{y}\int_{x<y}f(x)[g(x)-g(y)]\,\mathrm{d}P(x)\mathrm{d}P(y)\label{eq:fubini}\\
 & =\int_{x}\int_{y<x}f(x)[g(x)-g(y)]\,\mathrm{d}P(y)\mathrm{d}P(x)\nonumber \\
 & \quad+\int_{x}\int_{y<x}f(y)[g(y)-g(x)]\,\mathrm{d}P(y)\mathrm{d}P(x)\label{eq:variable_swap}\\
 & =\int_{x}\int_{y<x}[f(x)-f(y)][g(x)-g(y)]\,\mathrm{d}P(y)\mathrm{d}P(x)\\
 & \le0.
\end{align}
Line (\ref{eq:fubini}) swaps the order of integration using Fubini's
theorem \citep{knapp2005real} and line (\ref{eq:variable_swap}) exchanges the integration
variables $x$ and $y$.
\end{proof}

\section{\label{sec:proof_convex}Proof of Lemma 3}
\begin{manualtheorem}{3}
If $\mathcal{C}(v)$ is a compression function and $\mathcal{C}(v)/v$
is convex for all $v>0$, then the effective compression function
$\hat{\mathcal{C}}(v_{1}|v_{2})$ is concave in $v_{1}$ and convex
in $v_{2}$.
\end{manualtheorem}

\begin{proof}
Starting with Definition \ref{def:ecf} and letting $v_{1}=\lambda p+(1-\lambda)q$,
\begin{align}
\hat{\mathcal{C}}(v_{1}|v_{2}) & =\frac{\mathcal{C}(\lambda p+(1-\lambda)q+v_{2})}{\lambda p+(1-\lambda)q+v_{2}}(\lambda p+(1-\lambda)q)\\
 & =\mathcal{C}(\lambda(p+v_{2})+(1-\lambda)(q+v_{2}))\\
 & \quad-v_{2}\frac{\mathcal{C}(\lambda(p+v_{2})+(1-\lambda)(q+v_{2}))}{\lambda(p+v_{2})+(1-\lambda)(q+v_{2})}.
\end{align}
Because $\mathcal{C}(v)$ is concave and $\mathcal{C}(v)/v$ is convex,
\begin{align}
\hat{\mathcal{C}}(v_{1}|v_{2}) & \ge\lambda\mathcal{C}(p+v_{2})+(1-\lambda)\mathcal{C}(q+v_{2})\\
 & \quad-v_{2}\left(\lambda\frac{\mathcal{C}(p+v_{2})}{p+v_{2}}+(1-\lambda)\frac{\mathcal{C}(q+v_{2})}{q+v_{2}}\right)\\
 & =\lambda\frac{\mathcal{C}(p+v_{2})}{p+v_{2}}p+(1-\lambda)\frac{\mathcal{C}(q+v_{2})}{q+v_{2}}\\
 & =\lambda\hat{\mathcal{C}}(p|v_{2})+(1-\lambda)\hat{\mathcal{C}}(q|v_{2}).
\end{align}
Therefore $\hat{\mathcal{C}}(v_{1}|v_{2})$ is concave in $v_{1}$.

Similarly, letting $v_{2}=\lambda p+(1-\lambda)q$,
\begin{align}
\hat{\mathcal{C}}(v_{1}|v_{2}) & =\frac{\mathcal{C}(v_{1}+\lambda p+(1-\lambda)q)}{v_{1}+\lambda p+(1-\lambda)q}v_{1}\\
 & =\frac{\mathcal{C}(\lambda(v_{1}+p)+(1-\lambda)(v_{1}+q))}{\lambda(v_{1}+p)+(1-\lambda)(v_{1}+q)}v_{1}\\
 & \le\lambda\frac{\mathcal{C}(v_{1}+p)}{v_{1}+p}v_{1}+(1-\lambda)\frac{\mathcal{C}(v_{1}+q)}{v_{1}+q}v_{1}\\
 & =\lambda\hat{\mathcal{C}}(v_{1}|p)+(1-\lambda)\hat{\mathcal{C}}(v_{1}|q).
\end{align}
Therefore $\hat{\mathcal{C}}(v_{1}|v_{2})$ is convex in $v_{2}$.
\end{proof}

\end{document}